\def\BibTeX{{\rm B\kern-.05em{\sc i\kern-.025em b}\kern-.08em
    T\kern-.1667em\lower.7ex\hbox{E}\kern-.125emX}}
\newtheorem{definition}{Definition}%[section]
\newtheorem{theorem}{Theorem}
\newcommand{\Rmnum}[1]{\uppercase\expandafter{\romannumeral #1}}
\begin{document}
\title{Large-Scale FPGA-Based Privacy Amplification Exceeding $10^8$ Bits for Quantum Key Distribution}

\author{
    Xi Cheng, Hao-kun Mao, Hong-wei Xu, and Qiong Li$^{*}$%

    \thanks{Xi Cheng, Hao-Kun Mao, Hong-wei Xu, and Qiong Li and are with the Faculty of Computing, Harbin Institute of Technology, Harbin, Heilongjiang, 150000, China}

    \thanks{$^{*}$Qiong Li is the corresponding author. (e-mail: qiongli@hit.edu.cn).} % 通讯作者标注
    \thanks{This work is supported by National Natural Science Foundation of China (Grant number: 62071151), and Innovation Program for Quantum Science and Technology (Grant No. 2021ZD0300701).}
}

\markboth{Journal of \LaTeX\ Class Files,~Vol.~18, No.~9, September~2020}%
{How to Use the IEEEtran \LaTeX \ Templates}

\maketitle
%%%%%%%%%%%%%%%%%%%%%%%%%%%%%%%%%%%%%%%%%%%%%%%%%%%%%%%%%%%%%%%%%%%%%%%%%%%%%%%%%%%%%%%%%%%%%%
\begin{abstract}

Privacy Amplification (PA) is indispensable in Quantum Key Distribution (QKD) post-processing, as it eliminates information leakage to eavesdroppers. Field-programmable gate arrays (FPGAs) are highly attractive for QKD systems due to their flexibility and high integration. However, due to limited resources, input and output sizes remain the primary bottleneck in FPGA-based PA schemes for Discrete Variable (DV)-QKD systems. In this paper, we present a large-scale FPGA-based PA scheme that supports both input block sizes and output key sizes exceeding $10^8$ bits, effectively addressing the challenges posed by the finite-size effect. To accommodate the large input and output sizes, we propose a novel PA algorithm and prove its security. We implement and evaluate this scheme on a Xilinx XCKU095 FPGA platform. Experimental results demonstrate that our PA implementation can handle an input block size of $10^8$ bits with flexible output sizes up to the input size. For DV-QKD systems, our PA scheme supports an input block size nearly two orders of magnitude larger than current FPGA-based PA schemes, significantly mitigating the impact of the finite-size effect on the final secure key rate.

\end{abstract}

\begin{IEEEkeywords}
Quantum key distribution, privacy amplification, field programmable gate array, dynamic multi-stage multilinear-modular hashing.
\end{IEEEkeywords}
%%%%%%%%%%%%%%%%%%%%%%%%%%%%%%%%%%%%%%%%%%%%%%%%%%%%%%%%%%%%%%%%%%%%%%%%%%%%%%%%%%%%%%%%%%%%%%
\section{Introduction}
%QKD background
\IEEEPARstart{Q}{uantum} Key Distribution (QKD) is a secure communication technology based on quantum mechanics, enabling two legitimate parties, named Alice and Bob, to generate a shared key that is information-theoretically secure. \cite{Bennett1984}. In recent years, the performance and practicality of QKD systems have been significantly enhanced, with transmission distances surpassing 1000 kilometers \cite{liu2023experimental} and the highest secure key rate reaching 110 Mbps \cite{Li2023}. QKD is increasingly demonstrating its pivotal role in safeguarding communications against the threat of eavesdroppers.

%postprocessing and PA background
Practical QKD systems consist of two main components: the quantum communication subsystem and the post-processing subsystem \cite{Fung2010}. The post-processing subsystem primarily involves two critical steps: information reconciliation, and privacy amplification. As the final step of the QKD system, privacy amplification is particularly vital as it transforms error-corrected keys that are partially secure into highly secure keys, thus ensuring robust security in QKD \cite{Bennett1988, Bennett1995}.

%Key parameter in PA: Input block size and output size
% Input block size
For the DV-QKD system, the privacy amplification has to support a large input block size and an output size of comparable magnitude.\cite{Li2023}. The input block size of the PA scheme is critical to the QKD secure key rate. Research on the asymmetric BB84 protocol indicates that an input block size of $10^6$ bits can mitigate the finite-size effect \cite{Tomamichel2012}, however, this block size is not sufficient to achieve a high secure key rate. $10^8$ bits is considered appropriate to efficiently mitigate the finite-size effect \cite{Yuan201810}, which allows 85\% of the asymptotic secure key rate in T12 protocol \cite{Lucamarini2013}. 
% Output size
In practical high rate DV-QKD systems \cite{Yuan201810, Li2023}, the maximum output size is commonly limited to 1/3 or 1/2 of the input block size. Therefore, PA schemes must support output sizes comparable in scale to the input block size; otherwise, a significant portion of the secure key may be lost.

%FPGA in QKD and PostProcessing
Among current hardware platforms, FPGAs are widely used in QKD systems\cite{Yuan201810}, \cite{Zhang2012,Tanaka2012,walenta2014fast,lu2015fpga,Constantin2017,stanco2022versatile}, to control the optical components, distribute the synchronization signals, and accelerate post-processing operations \cite{Yang2020, Venkatachalam2023scalable}.
%FPGA advantages
% Due to the advantages of low power consumption, high integration, and parallel computing capabilities, FPGAs significantly enhance the implementation of PA schemes.
With the iterative updates of PA schemes, FPGAs offer flexible programming capabilities, making them a versatile and cost-effective solution. Moreover, FPGAs have proven to be effective tools for rapid prototyping prior to chip integration.
%FPGA resource limit
% However, the limited resources of FPGAs significantly restrict the scalability of input block sizes for DV-QKD systems, where increaseing the input block size leads to a near-exponential growth in direct computational load. Additionally, large input block sizes pose a great challenge to the internal storage capacity of FPGAs. Therefore, our main goal is to develop a large-scale PA scheme using the limited hardware resources.

% %%%%%%%%%%%%%%%%%%%%%%%%%%%%%%%%%%%%%%%%%%fixing
%previous FPGA-based scheme and existing problems
However, achieving both large input block sizes and large output sizes remains a significant challenge for existing FPGA-based PA schemes \cite{Yan2023}.
Most FPGA-based PA schemes for DV-QKD \cite{Zhang2012, Tanaka2012, Constantin2017, Yang2017, Li2019} utilize the Toeplitz matrix for secure key distillation. This hash family is particularly well-suited for FPGA implementation due to its simple construction, which allows for flexible partitioning and parallel processing.
%drawback
These PA schemes can support output sizes comparable to the input block size. However, due to the limitation in FPGA resources, these FPGA-based PA schemes are typically designed to support input block sizes on the order of $10^6$ bits. 
%previous work
In 2022, a high throughput FPGA-based PA scheme was proposed, achieving a substantial input block size of $10^8$ bits \cite{Yan2023}. However, when the block size reaches $10^8$ bits, the maximum output size of this PA scheme falls below $8 \times 10^5$ bits.

%Our work
In this paper, we propose a large-scale FPGA-based PA scheme. It achieves an input block size of $10^8$ bits while maintaining an output size of comparable scale. To support the large input and output sizes, we introduce a new hash family based on Multilinear-Modular Hashing (MMH) and prove its universal properties. Leveraging the new universal hash family, we proposed a novel PA algorithm that supports adaptive input and output sizes. Experimental results demonstrate that this scheme can support an input block size exceeding $10^{8}$ bits with an output size that can be flexibly adjusted. To our knowledge, our scheme outperforms existing FPGA-based DV-QKD PA schemes by two orders of magnitude in terms of input block size. Notably, this significant improvement cannot be achieved merely by increasing parallelism.

%The rest Section
The rest of this paper is organized as follows: As the foundation of this work, Section \Rmnum{2} first discusses the Universal hash families in PA, followed by a brief introduction to the new proposed universal hash family, and the large-scale PA algorithm. Section \Rmnum{3} describes the design rationale of main hardware modules and our optimizations in implementing the algorithm. In Section \Rmnum{4}, performance results on the selected FPGA device and comparisons with related works are presented. Finally, we draw conclusions for the proposed PA scheme and discuss directions for future work in Section \Rmnum{5}.

%%%%%%%%%%%%%%%%%%%%%%%%%%%%%%%%%%%%%%%%%%%%%%%%%%%%%%%%%%%%%%%%%%%%%%%%%%%%%%%%%%%%%%%%%%%%%%
\section{Large-scale Privacy Amplification Algorithm}

In this section, we begin by discussing widely used universal hash families \cite{carter1979universal} for PA and analyzing their advantages and disadvantages. Based on this analysis, we introduce a new universal hash family for PA, called the Dynamic Multi-stage Multilinear-Modular Hashing family (DM3H), and prove its universal property. Utilizing the DM3H family, we propose a new PA algorithm suitable for FPGA implementation, which can support larger input block sizes.

\subsection{Existing Universal Hash Families in Privacy Amplification}

% Universal
A universal hash family ensures that the keys extracted in the PA process are highly secure, such that any eavesdropper (Eve) has only a negligible probability of successfully guessing the secure key. Currently, the two most commonly used universal hash families in PA are Toeplitz matrix and modular arithmetic hash families.

% Toeplitz based hash family
Toeplitz matrix \cite{Krawczyk1994} is the most widely used universal hash family in PA. An $l \times n$ Toeplitz matrix can be defined as follows, it can transform $n$-bits input key into $l$-bits secure key:
\begin{equation}
  T_{l\times n} = \begin{pmatrix}
    t_{n} & t_{n-1} & \cdots & t_{2} & t_{1} \\
    t_{n+1} & t_{n} & \cdots & t_{3} & t_{2} \\
    \vdots & \vdots & \ddots & \vdots & \vdots \\
    t_{n+l-2} & t_{n+l-3} & \cdots & t_{l} & t_{l-1} \\
    t_{n+l-1} & t_{n+l-2} & \cdots & t_{l+1} & t_{l} \\
    \end{pmatrix}_{l\times n}
\end{equation}
The elements in Toeplitz matrix on each diagonal line are constant, therefore, the matrix can be represented by $n+l-1$ bits random numbers.
The Fast Fourier Transform (FFT) and Fast Number Theoretic Transform (NTT) can be used to effectively reduce computational complexity from $O(n^{2})$ to $O(nlogn)$.

% Modular arithmetic hash family
Modular arithmetic hash family \cite{carter1979universal} is another critical universal hash family used in PA. The primary operations of Modular arithmetic family involve large integer multiplications and modular operations.
\begin{definition}
  Let $\alpha$ and $\beta$ be two strictly positive integers, $\alpha > \beta$. Define a family modular arithmetic hashing of functions from $2^\alpha$ to $2^\beta$ as follows:
  \begin{equation}
    {\rm{MH}}: = \left\{ {{h_{b,c}}:{Z_{{2^\alpha }}} \to {Z_{{2^\beta }}}\left| {b,c \in {Z_{{2^\alpha }}}} \right.,\gcd (b,2) = 1} \right\}\end{equation}
    where the function $h_{b,c}$ is defined as follows:
    \begin{equation}{h_{b,c}}(x): = {{\left( {b \cdot x + c\bmod {2^\alpha }} \right)} \mathord{\left/
    {\vphantom {{\left( {b \cdot x + c\bmod {2^\alpha }} \right)} {{2^{\alpha  - \beta }}}}} \right.
    \kern-\nulldelimiterspace} {{2^{\alpha  - \beta }}}}
  \end{equation}
\end{definition}
Large integer multiplication can be accelerated using various algorithms, such as the Karatsuba algorithm, the Toom-Cook algorithm, the Schönhage-Strassen algorithm, and the Fürer algorithm.

% 两者的优缺点：
The advantage of Toeplitz hashing PA lies in its ability to freely split and process the input key sequence. However, FFT-accelerated Toeplitz matrix multiplication involves floating-point operations which demand significant resources and may introduce cumulative errors, potentially affecting accuracy at large scales.

Additionally, the advantage of modular arithmetic hashing in PA is that it allows the input key to be processed as a high-radix integer rather than a binary sequence, effectively reducing the input length in multiplication optimization \cite{Yan2022}. However, the limited resources of FPGA impose constraints on the scale of large integer multiplication.

% MMH family
Combining the advantages of the Toeplitz matrix and modular arithmetic hashing, the MMH was first applied to PA algorithms in \cite{Yan2023}. Specifically, the MMH can split the input key into multiple integers for parallel processing, and reduce computation complexity to $O(nlogn)$ by NTT-based optimization, allowing for efficient secure key distillation.
The MMH family has been proven to be a universal hash family \cite{Halevi1997}, with a collision probability of $1/Z_{p}$.

The definition of the MMH family is given as follows.
This construction operates within the finite field $\mathbb{Z}_p$, where $p$ is a prime number. The family of hash functions comprises all multilinear functions defined over $\mathbb{Z}_p^n$, where $n$ is an integer.
\begin{definition}
  Let $p$ be a prime and let $n$ be an integer $n > 0$. Define a family $\text{MMH}$ of functions from $\mathbb{Z}_p^n$ to $\mathbb{Z}_p$ as follows: 
	\begin{equation}
    \text{MMH} \coloneqq \{h_A: \mathbb{Z}_p^n \rightarrow \mathbb {Z}_p \mid X \in \mathbb{Z}_p^n, A \in \mathbb{Z}_p^n\}
  \end{equation}
\end{definition}

Denote the family of MMH as $H_A$, then the functions $h_A \in H_A$ are defined for any $X = \langle x_1, \ldots, x_n\rangle \in \mathbb{Z}_p^n$, $A = \langle a_1, \ldots, a_n \rangle \in \mathbb{Z}_p^n$ by
\begin{equation}
  h_A(X) \coloneqq \sum_{i=1}^{n} a_i \cdot x_i\mod p
\end{equation}

However, a significant limitation of the MMH family in PA is the constrained output size. Specifically, the maximum output size achievable by MMH is only $1/n$ of the input block size. 

\subsection{New Optimized Universal Hash Family}

In the following, we introduce an optimized universal hash family specifically designed to address the limitation of supporting sufficiently large output lengths. Concatenating multiple smaller sequences to construct a larger output key has been proven to be an effective method \cite{Sciences2006}. Inspired by this approach, we optimized the MMH by dynamically extracting multiple shorter output key sequences from the same original input key using different random sequences. These shorter sequences are then concatenated to form the final output key. It has been demonstrated that DM3H can handle an input block size exceeding $10^8$ bits, while maintaining arbitrary output sizes by flexibly choosing the number of operations $m$. The definition of DM3H is as follows:

\begin{definition}
  Let $p$ be a prime and let $n$ and $m$ be an integer, $n \ge m > 0$. Define the family $\text{DM3H}$ as follows: 
	\begin{equation}
    \text{DM3H} \coloneqq \{h_A : \mathbb{Z}_p^n \rightarrow \mathbb{Z}_p^m \mid X \in \mathbb{Z}_p^n, A \in \mathbb{Z}_p^{n+m-1}\}
  \end{equation}
\end{definition}

The result of the hash function $h_A(X)$ is the concatenation of the results of functions $f_1(X)$ to $f_m(X)$. In this formula, '$\Vert$' represents the concatenation symbol.

\begin{equation}
    \label{eq_def_DIMMH}
  h_A(X) \coloneqq f_1(X) \Vert f_2(X) \Vert \ldots \Vert f_m(X)
\end{equation}

In the equation above, for any $X = \langle x_1, \ldots, x_n\rangle \in \mathbb{Z}_p^n$, and $A = \langle a_1, \ldots, a_n , \ldots, a_{n+m-1} \rangle \in \mathbb{Z}_p^{n+m-1}$, the MMH functions $f_1,\ldots,f_m$ are defined as follows:
\begin{equation}
  f_i(X) \coloneqq \sum_{j=1}^{n} a_{j+i-1} \cdot x_j\mod p
\end{equation}

To incorporate the DM3H family into the PA algorithm, it is necessary to prove its universal property to ensure the security of the PA algorithm. The proof is as follows:

\begin{theorem}
  The DM3H family is Universal.
\end{theorem}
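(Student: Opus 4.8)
The plan is to reduce the universality of DM3H to a statement about collision probabilities and then to a rank computation over $\mathbb{Z}_p$. Recall that a family of hash functions with range $B$ is universal when $\Pr_{A}[h_A(X) = h_A(Y)] \le 1/|B|$ for every pair of distinct inputs; here $B = \mathbb{Z}_p^m$, so the target bound is $1/p^m$, with the seed $A$ drawn uniformly from $\mathbb{Z}_p^{n+m-1}$. First I would fix distinct $X, Y \in \mathbb{Z}_p^n$ and set $D = X - Y = \langle d_1,\dots,d_n\rangle$, a nonzero vector over $\mathbb{Z}_p$. Since each coordinate function $f_i$ is linear in $A$, the collision event $h_A(X) = h_A(Y)$ is equivalent to requiring $f_i(X) - f_i(Y) = \sum_{j=1}^{n} d_j\, a_{j+i-1} \equiv 0 \pmod{p}$ to hold simultaneously for all $i = 1,\dots,m$.

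Next I would encode these $m$ conditions as a single homogeneous linear system $M A^{\top} = 0$ over $\mathbb{Z}_p$, where $M$ is the $m \times (n+m-1)$ banded Toeplitz matrix whose $i$-th row places the entries $d_1,\dots,d_n$ in columns $i$ through $i+n-1$ and zeros elsewhere. The number of seeds $A$ producing a collision is then exactly $|\ker M|$, and the collision probability equals $p^{\dim \ker M}/p^{n+m-1}$. Hence it suffices to prove that $M$ has full row rank $m$: this gives $\dim \ker M = (n+m-1) - m = n-1$ and therefore collision probability exactly $p^{n-1}/p^{n+m-1} = 1/p^m$, which is precisely the universal bound.

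The main obstacle, and the only nonroutine step, is showing that this structured matrix $M$ has full row rank whenever $D \ne 0$. I cannot obtain this by treating the $f_i$ as independent MMH instances, because they reuse overlapping windows of the same seed $A$; independence fails and the single-function bound $1/p$ from the MMH family cannot simply be raised to the $m$-th power. Instead I would argue directly with a pivot/echelon structure: let $k$ be the least index with $d_k \ne 0$. Then in row $i$ the first nonzero entry sits in column $i+k-1$ with value $d_k$, and by minimality of $k$ every entry strictly below it vanishes. Restricting $M$ to the $m$ columns $k, k+1, \dots, k+m-1$ therefore yields an upper-triangular $m \times m$ submatrix $N$ with $N_{i,s} = d_{k+s-i}$, whose diagonal is the constant $d_k$. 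Its determinant is $d_k^{\,m}$, a nonzero element of the field $\mathbb{Z}_p$ (here primality of $p$ enters, making $d_k$ invertible), so $N$ is nonsingular and the $m$ rows of $M$ are linearly independent.

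Finally I would assemble the pieces: full row rank forces $\dim \ker M = n-1$, so $|\{A : h_A(X) = h_A(Y)\}| = p^{n-1}$ and $\Pr_A[h_A(X)=h_A(Y)] = 1/p^m = 1/|\mathbb{Z}_p^m|$ for every $X \ne Y$. Since this holds for all distinct input pairs, DM3H satisfies the universal property. The point I would state carefully is that the argument relies only on $D \ne 0$ together with the primality of $p$, which is exactly what guarantees the triangular submatrix is invertible and yields the optimal collision probability.
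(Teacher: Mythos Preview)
Your proposal is correct and follows essentially the same route as the paper: reduce the collision event to the homogeneous system $MA^{\top}=0$ with the banded $m\times(n+m-1)$ matrix built from $D=X-Y$, show $\operatorname{rank} M = m$, and deduce $|\ker M| = p^{n-1}$ so that the collision probability is $1/p^{m}$. The one substantive difference is that the paper simply \emph{asserts} that $X'\ne 0$ forces $\operatorname{rank} M = m$, whereas you actually prove it by exhibiting the upper-triangular $m\times m$ submatrix on columns $k,\dots,k+m-1$ with constant diagonal $d_k$ and determinant $d_k^{\,m}\ne 0$ in $\mathbb{Z}_p$; this is the step that genuinely uses primality of $p$, and your argument makes that dependence explicit where the paper's does not.
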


\begin{proof}
  Consider the situation of hash collision, for $X_1,X_2\in {Z}_p^n,X_1 \ne X_2$, the following result holds:
  \begin{equation}
    h(X_1)=h(X_2)=Y,\quad Y\in Z_p^m
  \end{equation}
  According to Equation \ref{eq_def_DIMMH}, the necessary condition for the above equation to hold is that:
  \begin{equation}
  f_i(X_1)=f_i(X_2), \quad \forall i \in \{1, 2, \dots, m\}
  \end{equation}
  Since $f(x)=\sum_{j=1}^{n} a_{j} \cdot x_j\mod p$, the hash function $f$ is additive. Let $X_1-X_2$ be denoted as $X^{'}$, the above formula can be rewritten as:
  \begin{equation}
    f_i(X_1-X_2)=f_i(X^{'})=0, \quad \forall i \in \{1, 2, \dots, m\}
  \end{equation}
  Since $X_1$ is not equal to $X_2$, we have that $X^{'}\ne 0$. 
  For fixed $X^{'}=\langle x_1, \ldots, x_n\rangle \in \mathbb{Z}_p^n$, in order to obtain the number of hash functions $h_A$ that satisfy the condition above, we can rewrite the formula as a system of linear equations.
  \begin{equation}
    \resizebox{.9\hsize}{!}{$%
      \begin{pmatrix}
        x_1 & x_2 & x_3 & \cdots &x_n & 0 & \cdots & 0 & 0 \\
        0 & x_1 & x_2 & \cdots &x_{n-1} & x_n & \cdots & 0 & 0 \\
        \vdots & \vdots & \ddots & \ddots & \ddots & \ddots & \ddots & \vdots & \vdots\\
        0 & 0 & \cdots & x_1 & x_2 & \cdots & \cdots & x_n & 0\\
        0 & 0 & \cdots &  & x_1 & x_2 & \cdots & x_{n-1} & x_n
      \end{pmatrix}_{m \times \left(n+m-1\right)}
      \cdot
      \begin{pmatrix}
        a_1 \\ a_2 \\ \vdots \\ a_{n+m-2} \\ a_{n+m-1}
      \end{pmatrix}
      =0
    $}
  \end{equation} 
  Since $X^{'}\ne 0$, the rank of the coefficient matrix of this system of linear equations is $m$, which restricts $m$ variables, leaving $n-1$ variables free. In the finite field, each variable can take $p$ distinct values, the total number of solutions to this system of equations is determined by the number of possible combinations of values for the free variables, which is $p^{n-1}$. So we have following result:
  \begin{equation}
      \left| \left\{ h_A\ :\ h\left( X_1 \right) =h\left( X_2 \right) |X_1\ne X_2 \right\} \right| \le p^{n-1}
      =\frac{\left| H \right|}{\left| \mathbb{Z}_p^m \right|}
  \end{equation}
  According to the definition of the Universal hash function family, the DM3H family is Universal.
\end{proof}

The DM3H family provides an efficient method for handling large input block sizes while significantly reducing the consumption of random numbers, which are crucial for DV-QKD systems. Leveraging the structure of the DM3H family, input key sequences can be efficiently divided into multiple sub-blocks, making it particularly suitable for FPGA-based PA algorithms.

\subsection{Novel Hybrid Privacy Amplification Algorithm}

Theoretically, the DM3H is a universal hash family suitable for PA. However, in practical QKD systems, the input key sequence for the PA process typically consists of a binary sequence rather than multiple large prime numbers. To address this discrepancy, we have configured the prime p within the DM3H as a Mersenne prime, denoted as $p = 2^{\gamma}-1$. This configuration facilitates a more effective translation of the PA inputs, which are elements from the set $\{0,1\}^{\gamma \times n}$, to the finite field $\mathbb{Z}_p^{n}$, thereby enhancing the efficiency and applicability of the PA process in high-performance QKD systems. The only exception occurs when a segment of the input sequence, denoted as $x_i$, equals $2^{\gamma}-1$. In such cases, this segment of the input key should be discarded and replaced. However, the probability of this occurrence is $1/{2^{\gamma}}$, which becomes negligible as $\gamma$ increases.

Nevertheless, due to the structural characteristics of the DM3H, its output size must be a multiple of $\gamma$, where $\gamma$ approaches $10^6$ bits. This significantly restricts the flexibility of PA implementation. To address this limitation, we concatenate the output of the DM3H with that of another universal hash function to achieve an adaptive output size. This approach also meets the necessary security requirements for PA. For the convenience of hardware implementation, we employ a modular arithmetic hash family, Multilinear Modular Hashing-Modular Arithmetic Hashing (MMH-MH) family, which can generate an output key of reduced but flexible length from the input. Additionally, its primary computation involves large integer multiplication, allowing for the reuse of the multiplication modules as DM3H, thereby reducing design complexity.

The definition of MMH-MH family are given as follows:
\begin{definition}
  If we denote the MMH family as $H_A$, and the MH family as $H_{B,C}$, then we can denote the MMH-MH family as $H_2={H_{B,C}}(H_A)$, where the function $h_2 \in H_2$ is defined as follows:
  \begin{equation}
    h_2(x):= {h_{b,c}}(h_a(x))
  \end{equation}
  Here, $h_a \in H_A$ and $h_{b,c}\in H_{B,C}$ are hash functions from their respective families.
\end{definition}

The main steps of the hybrid PA algorithm are illustrated as  Algorithm \ref{algo1}.

\begin{algorithm}
  \caption{Novel Hybrid PA Algorithm}\label{algo1}
  \begin{algorithmic}[1] % Enable line numbers
  \Statex \textbf{Input:}
  \Statex Input key sequence: $X \in \{0,1\}^N$.
  \Statex Random Seeds: $A \in \mathbb{Z}_{p}^{n+m-1}$, $b \in \mathbb{Z}_{2^{\gamma}}$, $c \in \mathbb{Z}_{2^{\gamma}}$.
  \Statex // $p = 2^{\gamma} - 1$, $\gcd(b, 2) = 1$
  \Statex \textbf{Output:}
  \Statex Output key sequence $K \in \{0,1\}^l$, $l=m\times\gamma+l'$

  \State $X' = (X, \text{padding}\; 0)$, $X' \in \{0,1\}^{\gamma \times n}$// Padding
  
  \State $X' = (x_1, \ldots, x_n)$ // Split data $X$
  \State $A = (a_1, \ldots, a_n, \ldots, a_{n+m-1})$ // Split data $A$
  \If{$x_i = 2^{\gamma} - 1$ for any $i = 1, \ldots, k$}
      \State  Reload data $x_i$
  \EndIf

  \Statex// DM3H function
  \For{$i = 1$ to $m+1$}
    \For{$j = 1$ to $n$}
        \State $y''_j = \text{NTT}(x_j) \cdot \text{NTT}(a_{j+i-1})$
        \State $y'_j = \text{INTT}(y''_i)$
    \EndFor
    \State $y_i = \sum_{j=1}^n y'_j \mod p$ 
  \EndFor

  \Statex// MH function
  \State $z'' = \text{NTT}(y_{m+1}) \cdot \text{NTT}(b)$
  \State $z' = \text{INTT}(z'')$
  \State $z = (z' + c \mod 2^{\gamma})^{2^{l'}}$
  
  \State \textbf{return} $K=(y_1, \ldots, y_m, z)$
  \end{algorithmic}
\end{algorithm}

First, the output secure key length, denoted as $l$, is calculated based on the parameters of the QKD system. Using the value of $l$ and the preset block size $\gamma$, the number of output blocks $m$ and the length of the remaining part $l'$ can be determined, where $l=m\times\gamma+l'$. To align with the constructions of DM3H and MMH-MH, the length of the input key $X$, denoted as $N$, must be padded to a multiple of $\gamma$. This padded input sequence is denoted as $X^{'}$. 

The distillation process employs both the DM3H and MMH-MH hash families. The padded input sequence $X^{'}$ is processed using the DM3H function $h_1 \in H_1$, resulting in $m$ blocks, each of length $\gamma$. Concurrently, the remaining part of the output key, with length $l'$, is generated using the MMH-MH function $h_2 \in H_2$. Finally, the secure key of length $l$ is obtained by concatenating the outputs from the DM3H and MMH-MH functions. The security of the final key is rigorously proven, ensuring its robustness against potential threats \cite{Sciences2006}.

%%%%%%%%%%%%%%%%%%%%%%%%%%%%%%%%%%%%%%%%%%%%%%%%%%%%%%%%%%%%%%%%%%%%%%%%%%%%%%%%%%%%%%%%%%%%%%

\section{Implementation of large-scale PA algorithm}

In this section, we present the implementation of the large-scale PA algorithm on an FPGA platform. While the FPGA platform offers high parallelism, supporting input sizes exceeding $10^8$ bits remains a significant challenge due to limited hardware resources.

Therefore, we analyzed the computational complexity of the PA algorithm and chose parameters that balance performance and resource constraints. A dual-path parallel NTT core was designed to improve the computational efficiency of large integer multiplication. Additionally, a high-speed modular accumulation unit was designed to support efficient pipelining for large integer accumulation. By reusing the large integer multiplication module and employing efficient pipelining, we achieved high throughput when processing large-scale input and output.

\begin{figure}[!t]
  \centering
  \includegraphics[width=3.5in]{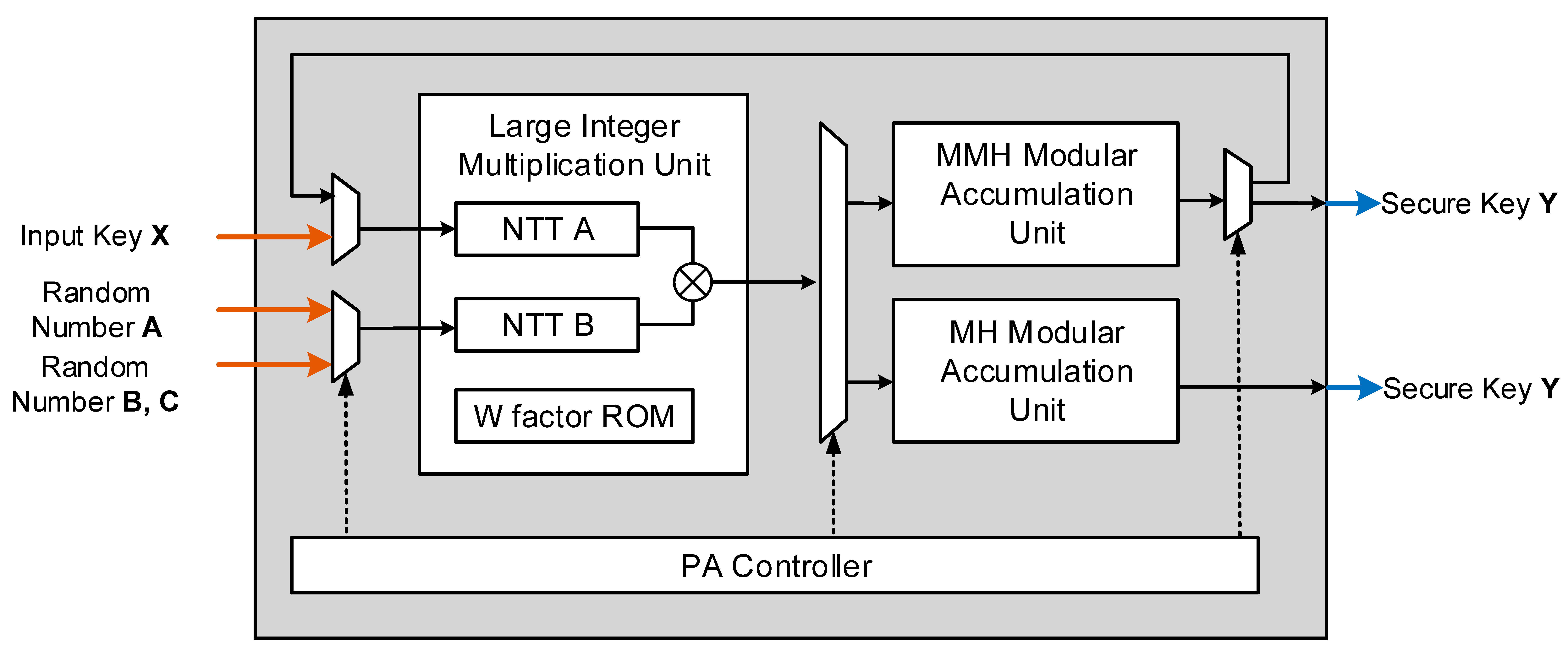}
  \caption{Schematic of the FPGA implementation of hybrid PA algorithm. Solid lines represent data signals, while dashed lines represent control signals.}\label{fig_archi}
\end{figure}

\subsection{Top-Level Architecture Design}

The top-level architecture of the PA implementation is shown in Figure \ref{fig_archi}. The implementation comprises several key modules: large integer multiplication unit, MMH modular accumulation unit, and MH modular accumulation unit. The large integer multiplication unit is designed to efficiently handle large number multiplications, which are essential for preprocessing input keys and random numbers. The MMH modular adder module accumulates results from the MMH function, that is $\sum_{j=1}^{n} a_{j+i-1} \cdot x_j\mod p$ in DM3H, while the MH modular adder module computes the results of the modular arithmetic hash function, that is $\left(b \cdot x + c \right) \bmod {2^\alpha }$ in MH.

%DM3H和MMH-MH过程
The hybrid PA algorithm utilizes the DM3H and MMH-MH families to distill the input key. When the number of blocks output by DM3H is $m$, the MMH computation needs to be performed $m+1$ times. Each of these $m+1$ computations is performed independently. The MMH modular adder module accumulates the results from the MMH function. The results of the first $m$ computations are directly output as part of the final key, while the result of the last computation undergoes additional compression through modular arithmetic hashing.

%参数确定
In this process, a crucial parameter is the sub-block length $\gamma$, where $2^{\gamma}-1$ is constrained to be a Mersenne prime, thus limiting the choice to the following values:
\begin{equation}
\begin{aligned}
\gamma \in \{ & 110503, 132049, 216091, 756839, 859433,\\
& 1257787, 1398269, 2976221, 3021377, 6972593,\\ 
& 13466917, 20996011, 24036583, 25964951,\\ 
& 30402457, 32582657, 37156667, 42643801,\\
& 43112609, 57885161, 74207281 \}
\end{aligned}
\end{equation}

Let $B$ denote the sub-block length. The computational complexity of this scheme can be approximately evaluated as $C \cdot \left\lceil \frac{n}{B} \right\rceil \cdot \left\lceil \frac{l}{B} \right\rceil \cdot B\log B$, where $n$ and $l$ represent the input block size and the output size, respectively, and $C$ is a constant.

Theoretically, increasing the sub-block length reduces the computational complexity, thereby enhancing the overall efficiency of distillation. However, due to limitations in FPGA resources, the scale of large integer multiplication cannot be increased indefinitely. Ultimately, we decide to set the block length $\gamma$ to 756839, which represents a balanced choice.

\begin{figure}[!t]
  \centering
  \includegraphics[width=3.5in]{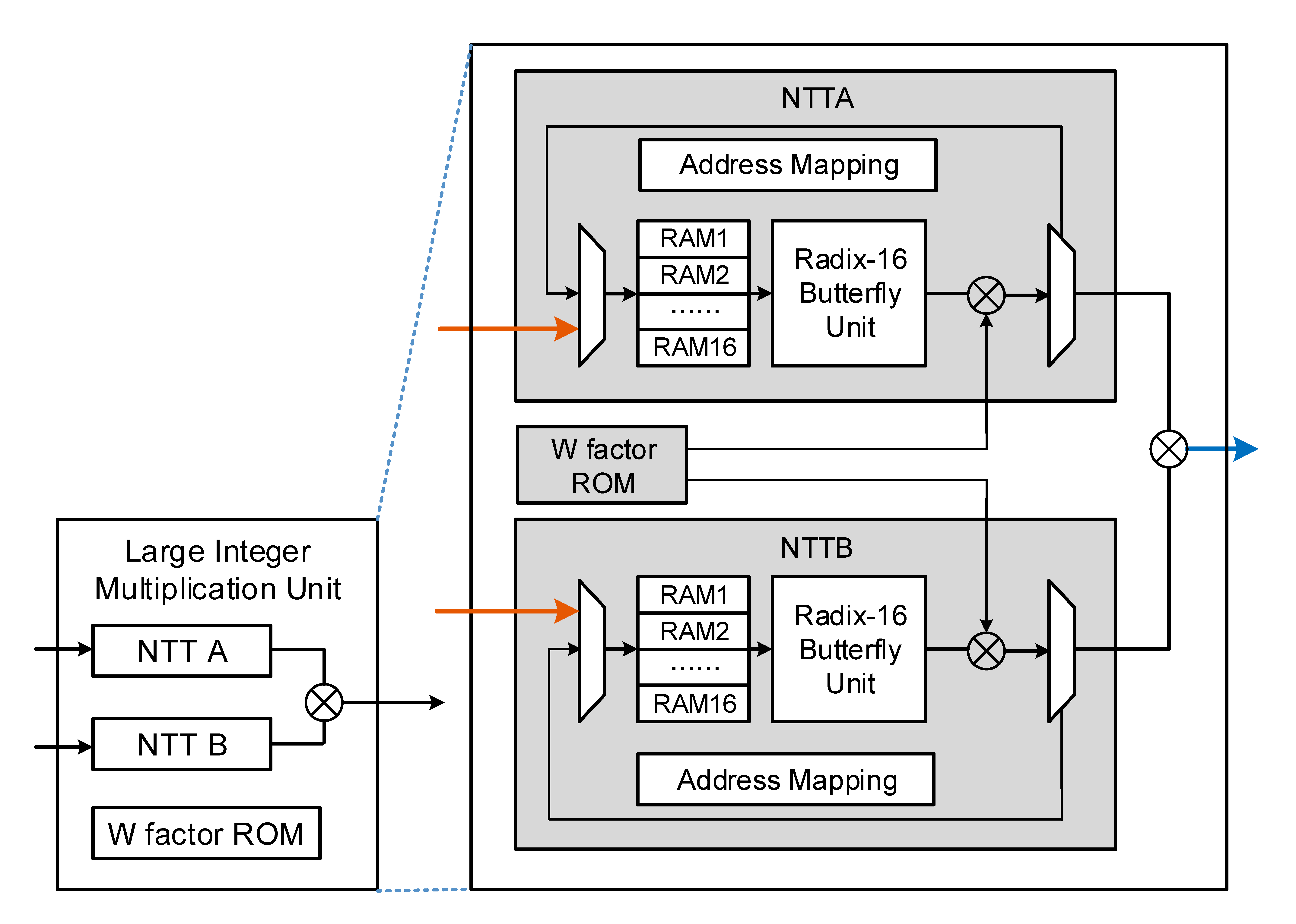}
  \caption{Structure of large integer multiplication unit.}\label{fig_mul}
\end{figure}

\subsection{Large Integer Multiplication Unit}

We utilize NTT acceleration to reduce the complexity of block multiplication from $O(n^2)$ to $O(n\log n)$. A 65,535-point NTT core is implemented to support the block-wise multiplication. To optimize the resource consumption of the NTT core, we set the data width of each point to 24 bits. Consequently, each NTT core can support a maximum input length of $32768 \times 24 = 786432$ bits. Therefore, we selected a block length $\gamma$ of 756,864, as $2^{756864}-1$ is a Mersenne prime, and this length closely matches the multiplier length supported by the NTT core.

As shown in Figure \ref{fig_mul}, the large integer multiplication unit consists of two parallel NTT cores and a multiplier, which can perform modular multiplication with a maximum length of 786,432 bits.

The principle of NTT-accelerated large integer multiplication is illustrated as follows. For N-point NTT, it divides a large integer $x$ into a sequence $x=(x_0, x_1,\ldots, x_{N-1})$, and then transforms the sequence using:
\begin{equation}
  X_k=\sum_{n=0}^{N-1}{x_n}\cdot w_{N}^{nk}\mod\ p
\end{equation}
where $p$ is a prime, and $w_N$ is a primitive n-th root of modulo $p$. The inverse transform INTT is defined as follows:
\begin{equation}
  x_n=\frac{1}{N}\sum_{k=0}^{N-1}{X_k}\cdot w_{N}^{-nk}\mod\ p
\end{equation}

The NTT can accelerate the multiplication of large integers by transforming the normal multiplication operation into simpler pointwise multiplications. In this NTT implementation, we chose a special prime $p=2^{64} - 2^{32} + 1$ as the modulus, where the 16-th root of unity $w_{16}=4096=2^{12}$ is a power of 2. This allows the multiplication by the root of unity $w_{16}^{n}$ during the butterfly operation to be replaced by a simple bit-shift operation, as multiplying by $w_{16}^{n}$ is equivalent to a left shift by $12 \times n$ bits.

This NTT implementation utilizes a radix-16 iterative approach instead of the traditional radix-2 method, allowing a 65,536-point NTT computation to be completed in 4 iterations. This approach effectively reduces the number of iterations, thereby enhancing the efficiency of the NTT.

\subsection{Modular Accumulation Unit}

The accumulate operations in MMH and MH are quite similar. Therefore, a typical modular accumulation unit can be used to describe them, as shown in Figure \ref{fig_acc}. The modular accumulation process has been optimized as follows:

\begin{equation}
  \begin{aligned}
    &(a + b) \mod \left(2^{756839} - 1\right)\\ 
    &= a \mod 2^{756839} + b \mod 2^{756839} \\
    & + \left\lfloor \frac{a}{2^{756839}} \right\rfloor + \left\lfloor \frac{b}{2^{756839}} \right\rfloor
  \end{aligned}
\end{equation}

From the equation above, it is clear that the modular accumulation process requires additional processing for data exceeding 756,839 bits at the end of the operation. All of these accumulation operations are performed using the same set of full adder arrays and storage modules. While this approach demands higher control precision, it significantly reduces the consumption of computational and storage resources.

\begin{figure}[!t]
  \centering
  \includegraphics[width=3.5in]{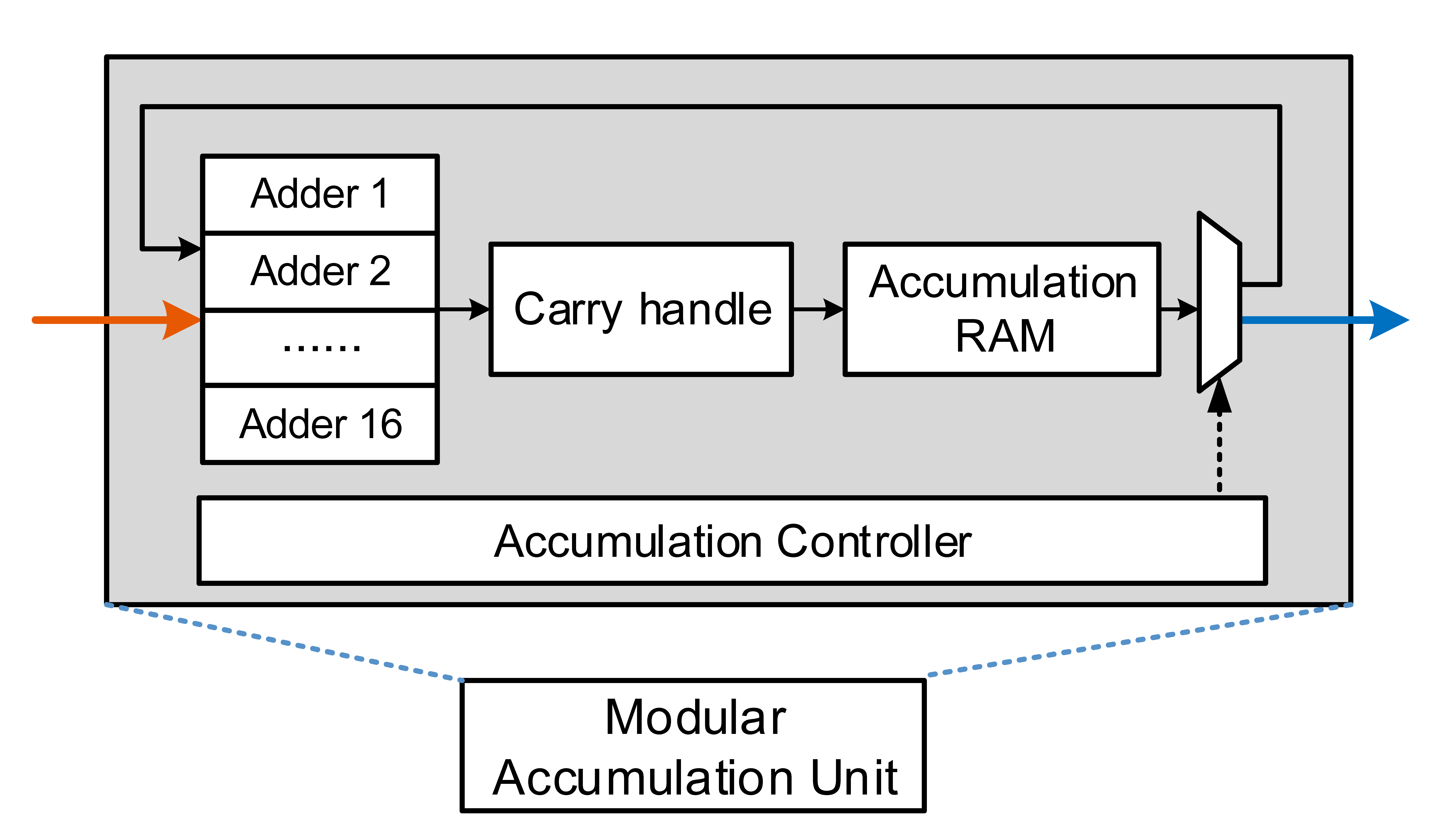}
  \caption{Optimization of binary modular and division operations.}\label{fig_acc}
\end{figure}

%%%%%%%%%%%%%%%%%%%%%%%%%%%%%%%%%%%%%%%%%%%%%%%%%%%%%%%%%%%%%%%%%%%%%%%%%%%%%%%%%%%%%%%%%%%%%%
\section{Experiments and performance analysis}

To validate the performance of our proposed PA scheme, we implement our PA scheme on the Xilinx XCKU095 FPGA. In the following parts, we summarize the implementation results of the proposed PA scheme and compare them with the previous PA schemes in several aspects. The performance of our PA scheme is analyzed in several aspects.

\subsection{Input Block Size and Output Size}

%输入码长的重要性，画图
Large input block sizes can effectively mitigate the finite-size effect, significantly enhancing the secure key rate of QKD systems. According to the experimental results on the T12 protocol by Lucamarini et al. \cite{Lucamarini2013}, an input block size of $10^8$ bits effectively mitigates the finite-size effect, allowing the secure key rate to reach 85\% of its asymptotic value on a fiber distance of 50Km. While the secure key rate falls to less than 50\% of the asymptotic value with an input block size of less than $10^7$ bits. This significant difference highlights the importance of sufficiently large input block sizes in achieving optimal performance in QKD systems.

%throughput和输入码长的关系，throughput放横轴
In order to effectively compare the capability of different PA schemes in supporting output sizes, we introduce the concept of compression ratio, defined as the ratio between the input block size and the output size, as a convenient metric for evaluation.
Theoretically, our scheme can accommodate input block sizes of arbitrary length. However, in practical implementation, given a fixed compression ratio, an excessively large input block size results in an increased secure key length. Since the block length of our implemented large integer multiplication is fixed, the number of output blocks increases, thereby increasing the number of PA iterations required. Consequently, this affects the overall throughput of the PA scheme.

To demonstrate the practicality of the proposed PA scheme, we compared the input block size and throughput of our PA scheme with those of existing FPGA-based PA schemes, as shown in Figure \ref{fig_7}. The figure also illustrates the relationship between input block size and throughput for the proposed scheme under different compression ratios.

\begin{figure}[!t]
  \centering
  \includegraphics[width=3.5in]{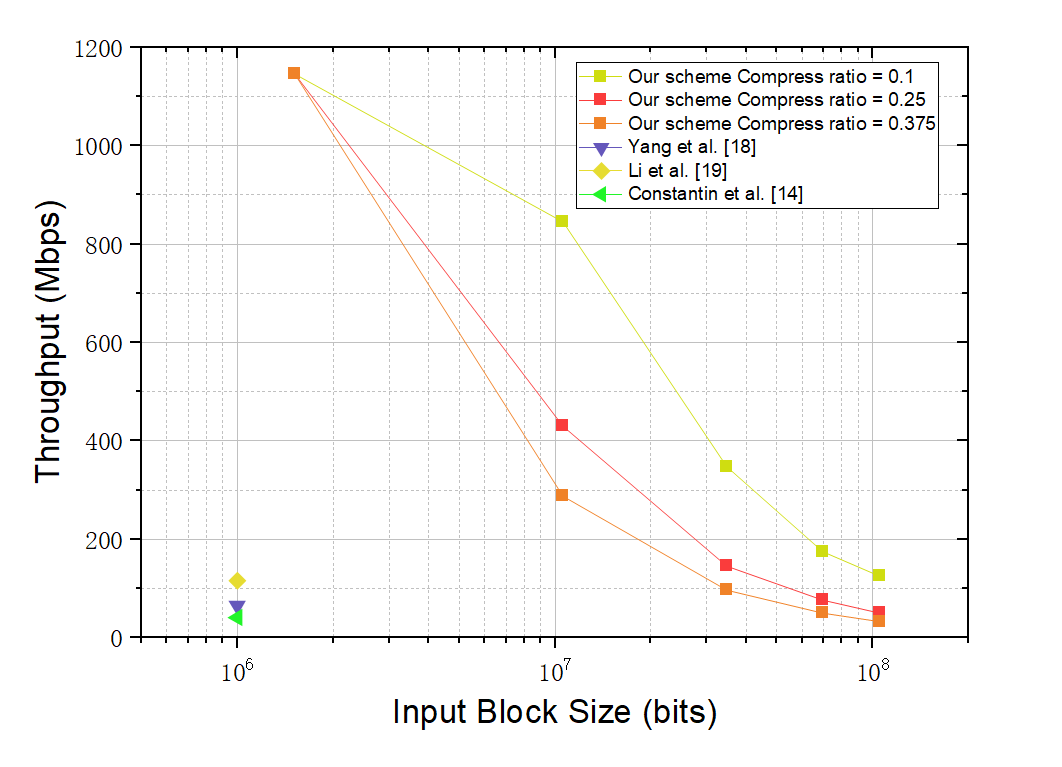}
  \caption{Comparison of input block size and throughput in FPGA-based PA schemes.}\label{fig_7}
\end{figure}

The comparison indicates that our proposed scheme can support input block sizes that are two orders of magnitude larger than existing schemes at the same compression ratio while maintaining throughput comparable to existing schemes.

The proposed PA scheme demonstrates high performance in handling large-scale inputs. Additionally, its ability to flexibly adjust input and output sizes makes it particularly well-suited for high-performance DV-QKD systems, ensuring flexibility and efficiency in various operational scenarios.

\subsection{Resource Consumption}

Table \ref{tab_res} provides a comparative analysis of resource consumption between our PA implementation and existing FPGA-based PA schemes. Resource costs include look-up tables (LUTs) and Block RAM (BRAM), while the key metrics are input block size and compression ratio. To demonstrate the practical performance of the scheme, throughput metrics are also incorporated into the comparison. In this table, throughput is presented under a compression ratio of 0.1 to ensure consistency.

\begin{table}
  \begin{center}
  \caption{The performance and resource consumption comparison of FPGA-based PA
 schemes}
  \label{tab_res}
  \begin{tabular}{ccccc}
  \hline
   & \cite{Yang2017} & \cite{Constantin2017} & \cite{Li2019} & Our proposed \\
  \hline
  FPGA & Virtex 7 & Virtex 6 & Virtex 6 & \makecell[c]{Kintex\\Ultrascale} \\
  LUTs & 15604 & 26571 & 37203 & 155757\\
  BRAM & 100kb & 0kb & 5.5Mb & 18.8Mb\\
  Max Block Size  & $10^6$ bits & $10^6$ bits & $10^6$ bits & \textbf{$\bm{10^8}$ bits}\\ 
  Compress Ratio  & (0, 1) & (0, 1) & (0, 1) & (0, 1)\\
  Throughput & 65.4Mbps & 41Mbps & 116Mbps & 125Mbps \\
  \hline 
  \end{tabular}
  \end{center}
\end{table}

The primary resource consumption in the proposed PA scheme stems from the NTT-based large integer multiplication units. As the input block size increases, the required number of NTT iterations also increases. To minimize the computational cost of the PA, the multiplication units must support larger multiplication numbers. In our PA implementation, the NTT supports up to 65,536 points and a maximum coefficient size of 64 bits, resulting in a per-round multiplication block size close to $10^6$ bits.

Although our scheme requires more resources, this investment is worthwhile. Specifically, our scheme achieves a more than 100-fold increase in input block size, which nearly corresponds to a 100-fold increase in computational complexity, with less than five times the resource consumption, significantly mitigating the impact of the finite-size effect on the secure key rate of DV-QKD systems. It is important to note that simply increasing resource usage through parallelism does not result in larger input block sizes, because the output key in privacy amplification must correlate with every bit of the input key. As such, the entire input key cannot be split into multiple sequences for processing but must be processed as a single, unified entity during the hashing operation. Our approach leverages architectural and algorithmic optimizations to achieve this significant improvement. Furthermore, resource consumption remains constant even as input block sizes and compression ratios increase, making it advantageous for QKD systems handling large input sizes.

This is the first FPGA-based PA scheme that can support both large input block sizes and high compression ratios, which is highly significant for enhancing the performance and security of DV-QKD systems.

%%%%%%%%%%%%%%%%%%%%%%%%%%%%%%%%%%%%%%%%%%%%%%%%%%%%%%%%%%%%%%%%%%%%%%%%%%%%%%%%%%%%%%%%%%%%%%
\section{Conclusion}

In this paper, we propose a novel hybrid PA algorithm that supports input block sizes exceeding $10^8$ bits at arbitrary compression ratios. Furthermore, we implement the PA scheme on a Kintex UltraScale FPGA and demonstrate its high performance. Experimental results show that the proposed PA scheme can handle input block sizes exceeding $10^8$ bits at typical DV-QKD compression ratios. To our knowledge, the achieved input block size is two orders of magnitude larger than that of existing FPGA-based PA schemes for DV-QKD, which is highly significant for mitigating the finite-size effect and enhancing the secure key rate of QKD systems. These results indicate that the proposed scheme meets the requirements of current high-performance DV-QKD systems, offering a new solution for FPGA-based PA in QKD applications.

Given the resource limitations of the FPGA, when the input block size of the PA scheme reaches $10^9$ bits, the throughput is limited to only 12 Mbps, which becomes a bottleneck for practical QKD systems. In future work, we will continue to optimize the NTT unit. These optimizations are expected to enable the PA scheme to support even large input block sizes with high compression ratios, thus making it more suitable for high-performance DV-QKD systems in practical applications.

%%%%%%%%%%%%%%%%%%%%%%%%%%%%%%%%%%%%%%%%%%%%%%%%%%%%%%%%%%%%%%%%%%%%%%%%%%%%%%%%%%%%%%%%%%%%%%
\section{Citations to the Bibliography}

% Generated by IEEEtran.bst, version: 1.14 (2015/08/26)


\begin{thebibliography}{10}
\providecommand{\url}[1]{#1}
\csname url@samestyle\endcsname
\providecommand{\newblock}{\relax}
\providecommand{\bibinfo}[2]{#2}
\providecommand{\BIBentrySTDinterwordspacing}{\spaceskip=0pt\relax}
\providecommand{\BIBentryALTinterwordstretchfactor}{4}
\providecommand{\BIBentryALTinterwordspacing}{\spaceskip=\fontdimen2\font plus
\BIBentryALTinterwordstretchfactor\fontdimen3\font minus \fontdimen4\font\relax}
\providecommand{\BIBforeignlanguage}[2]{{%
\expandafter\ifx\csname l@#1\endcsname\relax
\typeout{** WARNING: IEEEtran.bst: No hyphenation pattern has been}%
\typeout{** loaded for the language `#1'. Using the pattern for}%
\typeout{** the default language instead.}%
\else
\language=\csname l@#1\endcsname
\fi
#2}}
\providecommand{\BIBdecl}{\relax}
\BIBdecl

\bibitem{Bennett1984}
C.~H. Bennett and G.~Brassard, ``{Quantum cryptography: Public key distribution and coin tossing},'' in \emph{Proceedings of IEEE International Conference on Computers, Systems and Signal Processing}.\hskip 1em plus 0.5em minus 0.4em\relax IEEE, 1984, pp. 175--179.

\bibitem{liu2023experimental}
\BIBentryALTinterwordspacing
Y.~Liu, W.-J. Zhang, C.~Jiang, J.-P. Chen, C.~Zhang, W.-X. Pan, D.~Ma, H.~Dong, J.-M. Xiong, C.-J. Zhang, H.~Li, R.-C. Wang, J.~Wu, T.-Y. Chen, L.~You, X.-B. Wang, Q.~Zhang, and J.-W. Pan, ``Experimental twin-field quantum key distribution over 1000 km fiber distance,'' \emph{Phys. Rev. Lett.}, vol. 130, p. 210801, May 2023. [Online]. Available: \url{https://link.aps.org/doi/10.1103/PhysRevLett.130.210801}
\BIBentrySTDinterwordspacing

\bibitem{Li2023}
W.~Li, L.~Zhang, H.~Tan, Y.~Lu, S.~K. Liao, J.~Huang, H.~Li, Z.~Wang, H.~K. Mao, B.~Yan, Q.~Li, Y.~Liu, Q.~Zhang, C.~Z. Peng, L.~You, F.~Xu, and J.~W. Pan, ``{High-rate quantum key distribution exceeding 110 Mb s–1},'' \emph{Nature Photonics}, vol.~17, no.~5, pp. 416--421, 2023.

\bibitem{Fung2010}
\BIBentryALTinterwordspacing
C.-H.~F. Fung, X.~Ma, and H.~F. Chau, ``Practical issues in quantum-key-distribution postprocessing,'' \emph{Phys. Rev. A}, vol.~81, p. 012318, Jan 2010. [Online]. Available: \url{https://link.aps.org/doi/10.1103/PhysRevA.81.012318}
\BIBentrySTDinterwordspacing

\bibitem{Bennett1988}
\BIBentryALTinterwordspacing
C.~H. Bennett, G.~Brassard, and J.-M. Robert, ``Privacy amplification by public discussion,'' \emph{SIAM Journal on Computing}, vol.~17, no.~2, pp. 210--229, 1988. [Online]. Available: \url{https://doi.org/10.1137/0217014}
\BIBentrySTDinterwordspacing

\bibitem{Bennett1995}
C.~Bennett, G.~Brassard, C.~Crepeau, and U.~Maurer, ``Generalized privacy amplification,'' \emph{IEEE Transactions on Information Theory}, vol.~41, no.~6, pp. 1915--1923, 1995.

\bibitem{Tomamichel2012}
M.~Tomamichel, C.~C.~W. Lim, N.~Gisin, and R.~Renner, ``Tight finite-key analysis for quantum cryptography,'' \emph{Nature Communications}, vol.~3, no.~1, p. 634, Jan. 2012.

\bibitem{Yuan201810}
Z.~Yuan, A.~Plews, R.~Takahashi, K.~Doi, W.~Tam, A.~W. Sharpe, A.~R. Dixon, E.~Lavelle, J.~F. Dynes, A.~Murakami, M.~Kujiraoka, M.~Lucamarini, Y.~Tanizawa, H.~Sato, and A.~J. Shields, ``10-mb/s quantum key distribution,'' \emph{Journal of Lightwave Technology}, vol.~36, no.~16, pp. 3427--3433, 2018.

\bibitem{Lucamarini2013}
M.~Lucamarini, K.~A. Patel, J.~F. Dynes, B.~Fr\"{o}hlich, A.~W. Sharpe, A.~R. Dixon, Z.~L. Yuan, R.~V. Penty, and A.~J. Shields, ``Efficient decoy-state quantum key distribution with quantified security,'' \emph{Opt. Express}, vol.~21, no.~21, pp. 24\,550--24\,565, Oct 2013.

\bibitem{Zhang2012}
H.-F. Zhang, J.~Wang, K.~Cui, C.-L. Luo, S.-Z. Lin, L.~Zhou, H.~Liang, T.-Y. Chen, K.~Chen, and J.-W. Pan, ``A real-time qkd system based on fpga,'' \emph{Journal of Lightwave Technology}, vol.~30, no.~20, pp. 3226--3234, 2012.

\bibitem{Tanaka2012}
A.~Tanaka, M.~Fujiwara, K.-i. Yoshino, S.~Takahashi, Y.~Nambu, A.~Tomita, S.~Miki, T.~Yamashita, Z.~Wang, M.~Sasaki, and A.~Tajima, ``High-speed quantum key distribution system for 1-mbps real-time key generation,'' \emph{IEEE Journal of Quantum Electronics}, vol.~48, no.~4, pp. 542--550, 2012.

\bibitem{walenta2014fast}
\BIBentryALTinterwordspacing
N. Walenta, A. Burg, D. Caselunghe, J. Constantin, N. Gisin, O. Guinnard, et al., ``A fast and versatile quantum key distribution system with hardware key distillation and wavelength multiplexing,'' New Journal of Physics, vol.~16, no.~1, p. 013047,'' \emph{New J. Phys.}, vol. 16, no. 1, p.~013047, Jan. 2014., doi:DOI: 10.1088/1367-2630/16/1/013047
\BIBentrySTDinterwordspacing

\bibitem{lu2015fpga}
X.~Lu, L.~Zhang, Y.~Wang, W.~Chen, D.~Huang, D.~Li, S.~Wang, D.~He, Z.~Yin, Y.~Zhou \emph{et~al.}, ``Fpga based digital phase-coding quantum key distribution system,'' \emph{Science China Physics, Mechanics \& Astronomy}, vol.~58, 2015.

\bibitem{Constantin2017}
J.~Constantin, R.~Houlmann, N.~Preyss, N.~Walenta, H.~Zbinden, P.~Junod, and A.~Burg, ``{An FPGA-Based 4 Mbps Secret Key Distillation Engine for Quantum Key Distribution Systems},'' \emph{Journal of Signal Processing Systems}, vol.~86, no.~1, pp. 1--15, jan 2017.

\bibitem{stanco2022versatile}
A.~Stanco, F.~B.~L. Santagiustina, L.~Calderaro, M.~Avesani, T.~Bertapelle, D.~Dequal, G.~Vallone, and P.~Villoresi, ``Versatile and concurrent fpga-based architecture for practical quantum communication systems,'' \emph{IEEE Transactions on Quantum Engineering}, vol.~3, pp. 1--8, 2022.

\bibitem{Yang2020}
S.-S. Yang, Z.-G. Lu, and Y.-M. Li, ``High-speed post-processing in continuous-variable quantum key distribution based on fpga implementation,'' \emph{Journal of Lightwave Technology}, vol.~38, no.~15, pp. 3935--3941, 2020.

\bibitem{Venkatachalam2023scalable}
N.~Venkatachalam, F.~P. Shingala, S.~C, H.~P. S, D.~S, P.~Chandravanshi, and R.~P. Singh, ``Scalable qkd postprocessing system with reconfigurable hardware accelerator,'' \emph{IEEE Transactions on Quantum Engineering}, vol.~4, pp. 1--14, 2023.

\bibitem{Yan2023}
B.-Z. Yan, Q.~Li, H.-K. Mao, H.-W. Xu, and A.~A.~A. El-Latif, ``Large-scale and high-speed fpga-based privacy amplification for quantum key distribution,'' \emph{Journal of Lightwave Technology}, vol.~41, no.~1, pp. 169--175, 2023.

\bibitem{Yang2017}
S.-S. Yang, Z.-L. Bai, X.-Y. Wang, and Y.-M. Li, ``Fpga-based implementation of size-adaptive privacy amplification in quantum key distribution,'' \emph{IEEE Photonics Journal}, vol.~9, no.~6, pp. 1--8, 2017.

\bibitem{Li2019}
Q.~Li, B.-Z. Yan, H.-K. Mao, X.-F. Xue, Q.~Han, and H.~Guo, ``High-speed and adaptive fpga-based privacy amplification in quantum key distribution,'' \emph{IEEE Access}, vol.~7, pp. 21\,482--21\,490, 2019.

\bibitem{carter1979universal}
\BIBentryALTinterwordspacing
J.~Carter and M.~N. Wegman, ``Universal classes of hash functions,'' \emph{Journal of Computer and System Sciences}, vol.~18, no.~2, pp. 143--154, 1979. [Online]. Available: \url{https://www.sciencedirect.com/science/article/pii/0022000079900448}
\BIBentrySTDinterwordspacing

\bibitem{Krawczyk1994}
H.~Krawczyk, ``Lfsr-based hashing and authentication,'' in \emph{Advances in Cryptology --- CRYPTO '94}, Y.~G. Desmedt, Ed.\hskip 1em plus 0.5em minus 0.4em\relax Berlin, Heidelberg: Springer Berlin Heidelberg, 1994, pp. 129--139.

\bibitem{Yan2022}
B.~Yan, Q.~Li, H.~Mao, and N.~Chen, ``{An efficient hybrid hash based privacy amplification algorithm for quantum key distribution},'' \emph{Quantum Information Processing}, vol.~21, no.~4, p. 130, mar 2022.

\bibitem{Halevi1997}
S.~Halevi and H.~Krawczyk, ``Mmh: Software message authentication in the gbit/second rates,'' in \emph{Fast Software Encryption}, E.~Biham, Ed.\hskip 1em plus 0.5em minus 0.4em\relax Berlin, Heidelberg: Springer Berlin Heidelberg, 1997, pp. 172--189.

\bibitem{Sciences2006}
G.~Assche, \emph{Quantum cryptography and secret-key distillation}.\hskip 1em plus 0.5em minus 0.4em\relax Cambridge University Press, 2006.

\end{thebibliography}
\end{document}